\newtheorem{theo}{Theorem}
\newtheorem{prop}[theo]{Proposition}
\newcommand{\R}{\mathbb{R}}
\newcommand{\de}{\hbox{d}}
\def\barh{\bar{h}}
\def\Pilr{\Pi_{\rm LR}}
\def\Pisr{\Pi_{\rm SR}}
\def\etal{{\it et al.\ }}
\def\ie{{\it i.e.\ }}
\title[Patterns formed in a thin film]{\bf Patterns formed in a thin
  film with spatially homogeneous and non-homogeneous Derjaguin
  disjoining pressure}
\author[A. S. Alshaikhi et al.]{%
  A.\ns S.\ns A\ls L\ls S\ls H\ls A\ls I\ls K\ls H\ls I$\,^1$, \ns
  M.\ns G\ls R\ls I\ls N\ls F\ls E\ls L\ls D$\,^{1,2}$\ns
  \and S.\ns K.\ns W\ls I\ls L\ls S\ls O\ls N$\,^1$
}  
\affiliation{%
  $\,^1$ Department of Mathematics and Statistics, University of
  Strathclyde, Livingstone Tower,\\
  26 Richmond Street, Glasgow G1 1XH,UK
  $\,^2$ email\textup{\nocorr: \texttt{m.grinfeld@strath.ac.uk}}
}
\date{20 December 2020; resubmitted 11 June 2021}
\begin{document}
\maketitle

\begin{abstract}
  We consider patterns formed in a two-dimensional thin
  film on a planar substrate with a Derjaguin disjoining pressure and
  periodic wettability stripes.
  We rigorously clarify some of the results obtained numerically by
  Honisch \etal [{\it Langmuir} 31: 10618--10631, 2015] and embed them
  in the general theory of thin-film equations.
  For the case of constant wettability, we elucidate the change in the
  global structure of branches of steady state solutions as the average
  film thickness and the surface tension are varied.
Specifically we find, by using methods of local bifurcation theory and
the continuation software package AUTO, both nucleation and metastable
regimes.
We discuss admissible forms of spatially non-homogeneous disjoining
pressure, arguing for a form that differs from the one used by Honisch
{\it et al.}, and study the dependence of the steady state solutions on the
wettability contrast in that case.
\end{abstract}

\begin{keywords}
  thin films, disjoining pressure, non-homogeneous substrates, pattern
  formation
\end{keywords}

\begin{subjclass}[2020]
74K35 (Primary); 35B32 (Secondary) 
\end{subjclass}

% Section 1
\section{Introduction}
\label{section1}

Thin liquid films on solid substrates occur in many natural
situations. For example, they appear in tear films in the eye which
protect the cornea \cite{Braun2015} or in streams of lava from a
volcanic eruption \cite{Huppert1982}. Moreover, thin liquid films
occur in many technological applications, such as coatings
\cite{Kistler1997} and lubricants (e.g. oil films which lubricate the
piston in a car engine \cite{Wang2018}, drying paint layers
\cite{Howison1997}, and in the manufacture of microelectronic devices
\cite{Karnauschenko2020}).  For extensive reviews of thin-film flow
see, for example, Oron \etal \cite{Oron1997} and Craster and Matar
\cite{Craster2009}.

As these liquid films are thin, the Navier--Stokes equation governing
their flow can be reduced to a single degenerate fourth-order
quasi-linear parabolic partial differential equation (PDE) usually
known as a thin-film equation. In many applications a choice of a
disjoining pressure, which we denote by $\Pi$, must be made. Such a
term describes the action of surface forces on the film
\cite{Starov}. In different situations, different forms of disjoining
pressure are appropriate; these may incorporate long-range van der
Waals forces and/or various types of short-range interaction terms
such as Born repulsion; inclusion of a particular type of interaction
can have significant effects on the wettability of the surface and the
evolution of the film film, sometimes leading to dewetting phenomena,
{\it i.e.} the rupture of the film and the appearance of dry spots. (Here
and subsequently by ``wettability'' of the surface we mean the ability
of a solid surface to reduce the surface tension of a liquid on
contact with it such that it spreads over the surface and wets it.)

Witelski and Bernoff \cite{Witelski2000} were one of the first authors
to analyse mathematically the rupture of three-dimensional thin
films. In particular, considering a disjoining pressure of the form
$\Pi = -1/(3 h^3)$ (we use the sign convention adopted in Honisch
\etal \cite{Honisch2015}), they analysed planar and axisymmetric
equilibrium solutions on a finite domain.  They showed that a
disjoining pressure of this form leads to finite-time rupture
singularities, that is, the film thickness approaches zero in finite
time at a point (or a line or a ring) in the domain. In a related more
recent paper, Ji and Witelski \cite{Ji2017} considered a different
choice of disjoining pressure and investigated the finite-time rupture
solutions in a model of thin film of liquid with evaporative
effects. They observed different types of finite-time singularities due
to the non-conservative terms in the model. In particular, they showed
that the inclusion of non-conservative term can prevent the disjoining
pressure from causing finite-time rupture.

A pioneering theoretical study of a thin-film equation with a
disjoining pressure term given by a combination of negative powers of
the thin film thickness is that by Bertozzi \etal
\cite{Bertozzi2001}, who studied the formation of dewetting patterns
and the rupture of thin liquid films due to long-range attractive and
short-range Born repulsive forces, and determined the structure of the
bifurcation diagram for steady state solutions, both with and without
the repulsive term.

Aiming to quantify the temporal coarsening in a thin film, Glasner and
Witelski \cite{Glasner2005} examined two coarsening mechanisms that
arise in dewetting films: mass exchange that influences the breakdown
of individual droplets and spatial motion that results in droplet
rupture as well as merging events. They provided a simple model with a
disjoining pressure which combines the effects of both short- and
long-range forces acting on the film. Kitavtsev \etal
\cite{Kitavtsev2011} analysed the long-time dynamics of dewetting in a
thin-film equation by using a disjoining pressure similar to that used
by Bertozzi \etal \cite{Bertozzi2001}. They applied centre manifold
theory to derive and analysed an ordinary differentual equation model
for the dynamics of dewetting.

The recent article by Witelski \cite{witelski2020nonlinear} presents a
review of the various stages of dewetting for a film of liquid
spreading on a hydrophobic substrate. Different types of behaviour of
the film are observed depending on the form of the disjoining
pressure: finite-time singularities, self-similar solutions and
coarsening.  In particular, he divides the evolution of dewetting
processes into three phases: an initial linear instability that leads
to finite-time rupture (short time dynamics), which is followed by the
propagation of dewetting rims and instabilities of liquid ridges
(intermediate time dynamics), and the eventual formation of
quasi-steady droplets (long time dynamics).

Most of the previous studies of thin liquid films focussed on films on
homogeneous substrates. However, thin liquid films on non-homogeneous
chemically patterned substrates are also of interest. These have
many practical applications, such as in the construction of
microfluidic devices and creating soft materials with a particular
pattern \cite{Quake2000}. Chemically patterned substrates are an
efficient way to obtain microstructures of different shapes by using
different types of substrate patterning \cite{Sehgal2002}. Chemical
modification of substrates can also be used to avoid spontaneous
breakup of thin films, which is often highly undesirable, as, for
example, in printing technology \cite{Brasjen2011,Ondarcuhu2013}.

Due to their many applications briefly described above, films on
non-homogeneous substrates have been the object of a number of
previous theoretical studies which motivate the present work. For
example, Konnur \etal \cite{Konnur2000} found that in the case of an
isolated circular patch with wetting properties different from that of
the rest of the substrate that chemical non-homogeneity of the
substrate can greatly accelerate the growth of surface instabilities.
Sharma \etal \cite{Sharma2003} studied instabilities of a liquid film
on a substrate containing a single heterogeneous patch and a substrate
with stripes of alternating less and more wettable regions. The main
concern of that paper was to investigate how substrate patterns are
reproduced in the liquid film, and to determine the best conditions for
templating.

Thiele \etal \cite{Thiele2003} performed a bifurcation analysis using
the continuation software package AUTO \cite{AUTO2009} to study
dewetting on a chemically patterned substrate by solving a thin-film
equation with a disjoining pressure, using the wettability contrast as
a control parameter. The wettability contrast measures the degree of
heterogeneity of the substrate; it is introduced and defined
rigorously in (\ref{LDP}) in Section~\ref{section6}.  Honisch \etal
\cite{Honisch2015} modelled an experiment in which organic molecules were
deposited on a chemically non-homogeneous silicon oxide substrates
with gold stripes and discuss the redistribution of the liquid following
deposition.

In a recent paper, Liu and Witelski \cite{LiuandWitelski2020}
studied thin films on chemically heterogeneous substrates. They claim
that in some applications such as digital microfluidics, substrates
with alternate hydrophilic and hydrophobic rectangular areas are
better described by a piecewise constant function than by a sinusoidal
one. Therefore, in contrast to other studies, including the present
one, they study substrates with wettability characteristics described
by such a function. Based on the structure of the
bifurcation diagram, they divide the steady-state solutions into six
distinct but connected branches and show that the only unstable branch
corresponds to confined droplets, while the rest of the branches are stable.

In the present work, we build on the work of Thiele \etal
\cite{Thiele2003} and Honisch \etal \cite{Honisch2015}. Part of our
motivation is to clarify and explain rigorously some of the numerical
results reported in these papers. In the sinusoidally striped
non-homogeneous substrate case, we offer a justification for using a
form of the disjoining pressure that differs from the one used in
these two papers. A detailed plan of the paper is given in the last
paragraph of Section~\ref{section2}.

% Section 2
\section{Problem Statement}
\label{section2}

%%%%%%%%%%%%%%%%%%%%%%%%% New Abdul's material %%%%%%%%%%%%%%%%%%%%%%%%%

Denoting the thickness of the thin liquid film by $z=h(x,y,t)$, where
$(x,y,z)$ are the usual Cartesian coordinates and $t$ is time, 
Honisch \etal \cite{Honisch2015} consider the thin-film equation
 \begin{equation}\label{EQ1}
h_t = \nabla \cdot \left\{ Q(h) \nabla P(h,x,y) \right\},\qquad t >0,
\qquad (x,y) \in \mathbb{R}^2,
\end{equation}
where $Q(h)= h^3/(3\eta)$ is the mobility coefficient with $\eta$
being the dynamic viscosity. The 
generalized pressure $P(h,x,y)$ is given by  
\begin{equation}\label{gpress}
P(h,x,y) = -\gamma \Delta h - \Pi(h,x,y), 
\end{equation}
where $\gamma$ is the coefficient of surface tension and we follow \cite{Honisch2015}
in taking the Derjaguin disjoining pressure $\Pi(h,x,y)$ in the
spatially homogeneous case to be of the form
\begin{equation} \label{pidef1}
 \Pi(h,x,y)  = -\frac{A}{h^3} +\frac{B}{h^6} 
\end{equation}
suggested, for example, by Pismen \cite{Pismen}. Here $A$ and $B$ are
positive parameters that measure the relative contributions of the
long-range forces (the $1/h^3$ term) and the short-range ones (the
$1/h^6$ term). However, we will see that both of these constants can
be scaled out of the mathematical problem. Equation (\ref{pidef1})
uses the exponent $-3$ for the long-range forces and $-6$ for the
short-range forces as in Honisch \etal \cite{Honisch2015}.  Other
choices include the pairs of long- and short-range exponents
$(-2,-3)$, $(-3,-4)$ and $(-3,-9)$ discussed by \cite{Bertozzi2001,
  Schwartz98, Sibley}. In terms of the classification of Bertozzi
\etal \cite[Definition 1]{Bertozzi2001}, the choice $(-3,-6)$ is
\textit{admissible} (as are all the other pairs above), and falls in
their region II; we expect that choosing other admissible exponent
pairs will give qualitatively the same results as those obtained here.

In what follows, we study thin films on both homogeneous and
non-homogeneous substrates. In the non-homogeneous case, we will
modify \eqref{pidef1} by assuming that the Derjaguin pressure term
$\Pi$ changes periodically in the $x$-direction with period $L$.  The
appropriate forms of $\Pi$ in the non-homogeneous case are discussed
in Section \ref{section6}.

Hence, in order better to understand solutions of (\ref{EQ1}), we
  study its one-dimensional version,
\begin{equation}\label{qua}
  h_t = (Q(h)P(h,x)_x)_x, \quad\quad 0< x<L. 
\end{equation}

We start by characterising steady state solutions of
(\ref{qua}) subject to periodic boundary conditions at $x=0$ and
$x=L$.
In other words, we seek steady state solutions $h(x)$ of (\ref{qua}),
satisfying the non-local boundary value
problem
\begin{equation}\label{nldim}
  \gamma h_{xx} + \frac{B}{h^6} - \frac{A}{h^3} -\frac{1}{L} \int_0^L
\left[ \frac{B}{h^6} - \frac{A}{h^3}\right] \, \de{x} = 0,\; \; 0< x <L,
\end{equation}
subject to the constraint
\begin{equation}\label{nldim1}
  \frac{1}{L}\int_0^L h(x) \, \de{x} = h^*,  
\end{equation}
where the constant $h^*\,(>0)$ denotes the (scaled) average film
thickness, and the periodic boundary conditions
\begin{equation}\label{nldim2}
  h(0)=h(L),  \quad h_x(0)=h_x(L).
\end{equation}  
    
Now we non-dimensionalise. Setting
\begin{equation}
  H = \left( \frac{B}{A} \right)^{1/3}, \; \; h= H \tilde{h}, \quad
  \hbox{and} \quad x = L\tilde{x},
\end{equation}
in (\ref{nldim}) and removing the tildes, we obtain  
\begin{equation}\label{nl}
\epsilon^2 h_{xx} + f(h) - \int_0^1 f(h) \, \de{x} = 0,\; \; 0< x <1,
\end{equation}
where
\begin{equation}\label{nl1}
f(h)= \frac{1}{h^6}-\frac{1}{h^3},
\end{equation}
and
\begin{equation}\label{eps}
  \epsilon^2 = \frac{\gamma B^{4/3}}{L^2 A^{7/3}},
\end{equation}  
subject to the periodic boundary conditions
\begin{equation}\label{nl2}
h(0)=h(1), \quad h_x(0)=h_x(1),
\end{equation}
and the volume constraint
\begin{equation}\label{vc}
  \int_0^1 h(x) \, \de{x} = \barh,
\end{equation}
where
\begin{equation}\label{mc}
\barh= \frac{h^*A^{1/3}}{B^{1/3}}.
\end{equation}

Note that the problem (\ref{nl})--(\ref{mc}) is very similar to the
corresponding steady state problem for the Cahn--Hilliard equation
considered as a bifurcation problem in the parameters $\barh$ and
$\epsilon$ by Eilbeck \etal \cite{Eilbeck1989}. The boundary
conditions considered in that work were the physically natural double
Neumann conditions. The periodic boundary conditions (\ref{nl2}) in
the present problem slightly change the analysis, but our general
approach in characterising different bifurcation regimes still follows
that of Eilbeck \etal \cite{Eilbeck1989}, though the correct
interpretation of the limit as $\epsilon \to 0^+$ is that now we let
the surface tension $\gamma$ go to zero.
In particular, we perform a Liapunov--Schmidt reduction to determine
the local behaviour close to bifurcation points and then use AUTO (in
the present work we use the AUTO-07p version \cite{AUTO2009}) to
explore the global structure of branches of steady state solutions both
for the spatially homogeneous case and for the spatially
non-homogeneous case in the case of an $x$-periodically patterned
substrate.

We first investigate the homogeneous case and, having elucidated the
structure of the bifurcations of non-trivial solutions from the
constant solution $h=\barh$ in that case in Sections~\ref{section3}
and~\ref{section4}, we study forced rotational ($O(2)$) symmetry
breaking in the non-homogeneous case in Section~\ref{section6}.
In Appendix \ref{sb}, we present a general result about $O(2)$ symmetry
breaking in the spatially non-homogeneous case. It shows that in the
spatially non-homogeneous case, only two steady state solutions remain
from the orbit of solutions of (\ref{nl})--(\ref{mc}) induced by its
$O(2)$ invariance.
We concentrate on the simplest steady state solutions of
(\ref{nl})--(\ref{mc}), as by a result of Laugesen and Pugh
\cite[Theorem 1]{LP2000a} only such solutions, that is, constant
solutions and those having only one extremum point, are linearly
stable in the homogeneous case. For information about dynamics of
  one-dimensional thin-film equations the reader should also consult Zhang
  \cite{Zhang09}.

\medskip

In what follows, we use $\|\cdot\|_2$ to denote $L^2([0,1])$ norms.

% Section 3
\section{Liapunov--Schmidt Reduction in the Spatially Homogeneous Case}
\label{section3}

We start by performing an analysis of the dependence of the global structure of branches of steady state solutions of the problem in the spatially homogeneous case given by (\ref{nl})--(\ref{mc}) on the parameters $\barh$ and $\epsilon$.

In what follows, we do not indicate explicitly the dependence of the
operators on the parameters $\barh$ and $\epsilon$, and all of the
calculations are performed for a fixed value of $\barh$ and close to a
bifurcation point $\epsilon=\epsilon_k$ for $k=1,2,3,\ldots$ defined
below.

We set $v=h-\barh$, so that $v=v(x)$ has zero mean, and rewrite (\ref{nl}) as
\begin{equation}
G(v) = 0,
\end{equation}
where
\begin{equation}
G(v)=\epsilon^2 v_{xx} + f(v+\barh) - \int_{0}^{1}f(v(x)+\barh) \, \de{x}.
\end{equation}
If we set
\begin{equation}
H = \left\{w\in C(0,1) \,:\, \int_{0}^{1} w(x) \, \de{x} =0\right\},
\end{equation}
where
$G$ is an operator from $D(G)\subset H \rightarrow H$, then
$D(G)$ is given by
\begin{equation}
D(G) = \left\{v\in C^2(0,1) \,:\, v(0)=v(1), \, v_{x}(0)=v_{x}(1), \, \int_{0}^{1} v(x) \, \de{x} =0 \right\}.
\end{equation}
The linearisation of $G$ at $v$ applied to $w$ is defined by
\begin{equation}
dG(v) w = \lim_{\tau\to0} \frac{G(v+\tau w)-G(v)}{\tau}.
\end{equation}

We denote $dG(0)$ by $S$, so that $S$ applied to $w$ is given by
\begin{equation}\label{eqS}
S w = \epsilon^2 w_{xx} + f'(\barh)w.
\end{equation}
To locate the bifurcation points, we have to find the nontrivial
solutions  of the equation $S w = 0$ subject to
\begin{equation}\label{bcS}
w(0)=w(1), \quad \quad w_x(0)=w_x(1). 
\end{equation}
The kernel of $S$ is non-empty and two dimensional when
\begin{equation}\label{ek}
\epsilon = \epsilon_{k} = \frac{\sqrt{f'(\barh)}}{2k\pi}
\quad \textrm{for} \quad
k = 1,2,3,\ldots,
\end{equation}
and is spanned by $\cos(2k\pi x)$ and $\sin(2k\pi x)$.
That these values of $\epsilon$ correspond to bifurcation points
follows from two theorems of Vanderbauwhede \cite[Theorems 2 and 3]{Vanderbauwhede1981}.

In a neighbourhood of a bifurcation point $(\epsilon_k,0)$ in
$(\epsilon,v)$ space, solutions of $G(v)=0$ on $H$ are in one-to-one
correspondence with solutions of the reduced system of equations on
$\R^2$,
\begin{equation}\label{bf}
g_1(x,y,\epsilon)=0, \quad
g_2(x,y,\epsilon)=0,
\end{equation}
for some functions $g_1$ and $g_2$ to be obtained through the
Liapunov--Schmidt reduction \cite{Golubitsky1985}.

To set up the Liapunov--Schmidt reduction, we decompose $D(G)$ and $H$
as follows:
\begin{equation}
D(G)= \hbox{ker}\,S \oplus M
\end{equation}
and
\begin{equation}
H = N \oplus \hbox{range} \, S.
\end{equation}
Since $S$ is self-adjoint with respect to the $L^2$-inner product
denoted by $\langle \cdot, \cdot \rangle$, we can choose
\begin{equation}
M = N = \hbox{span}\, \left\{ \cos(2kx), \sin(2kx) \right\},
\end{equation}
and denote the above basis for $M$ by $\left\{ w_1, w_2 \right\}$ and for $N$ by $\left\{ w_1^*, w_2^* \right\}$.
We also denote the projection of $H$ onto $\hbox{range}\, S$ by $E$.

Since the present problem is invariant with respect to the group $O(2)$, the functions $g_1$ and $g_2$ must have the form
\begin{equation}\label{g1g2}
g_1(x,y,\epsilon) = xp(x^2+y^2,\epsilon), \quad
g_2(x,y,\epsilon) = yp(x^2+y^2,\epsilon),
\end{equation}
for some function $p(\cdot,\cdot)$ \cite{Chossat2000}, which means
that in order to determine the bifurcation structure, the only terms
that need to be computed are $g_{1,x\epsilon}$ and $g_{1,xxx}$, as
these immediately give $g_{2,y\epsilon}$ and $g_{2,yyy}$ and all of
the other second and third partial derivatives of $g_1$ and $g_2$ are
identically zero.

Following Golubitsky and Schaeffer \cite{Golubitsky1985}, we have
\begin{eqnarray}
g_{1,x\epsilon}
& = & \langle w_1^*, dG_{\epsilon}(w_1)-d^2G(w_1, S^{-1}E G_{\epsilon}(0)) \rangle, \label{222}\\
g_{1,xxx}
& = &\langle w_1^*, d^3G(w_1,w_1,w_1) - 3d^2G(w_1,S^{-1}E[d^2G(w_1,w_1)]) \rangle, \label{333}
\end{eqnarray}
where
\begin{equation} \label{drG}
d^rG(z_1,\ldots,z_r)=\left. \frac{\partial^r}{\partial t_1\ldots\partial t_r}G (t_1z_1+\ldots+t_rz_r) \right\vert_{t_1=\ldots=t_r=0} \quad \textrm{for} \quad r = 1, 2, 3, \ldots,
\end{equation}
and we choose
\begin{equation}
w_1= w_1^*=\cos(2k\pi x),
\end{equation}
where $w_1 \in \hbox{ker}\ S$ and $w_1^* \in (\hbox{range}\ S)^{\perp}$.
In particular, from (\ref{drG}) we have
\begin{eqnarray}
d^2G(z_1,z_2)
& = & \left. \frac{\partial^2}{\partial t_1\partial t_2} G(t_1z_1+t_2z_2) \right\vert_{t_1=t_2=0} \nonumber\\
& = & \frac{\partial^2}{\partial t_1\partial t_2} \Big[ \epsilon_k
      (t_1z_{1,xx}+t_2z_{2,xx}) + f(t_1z_1+t_2z_2+\barh)\nonumber \\
& - & \left. \int_{0}^{1} f(t_1z_1+t_2z_2+\barh) \, \de{x}
      \Big]\right\vert_{t_1=t_2=0}\nonumber \\
  & = & f''(\barh)z_1z_2 - \int_{0}^{1}
       f''(\barh)z_1z_2 \, \de{x},
\end{eqnarray}
and so
\begin{eqnarray}
d^2G(\cos(2k\pi x),\cos(2k\pi x))
& = & f''(\barh)\cos^2(2k\pi x)- \int_{0}^{1} f''(\barh)\cos^2(2k\pi x) \, \de{x} \nonumber\\
& = & f''(\barh)\cos^2(2k\pi x)- \frac{1}{2} f''(\barh).
\end{eqnarray}

To obtain $S^{-1}E [d^2G(w_1,w_1)]$, which we denote by $R(x)$, so
that $SR=E[d^2G(w_1,w_1)]$, we use the definition of $\epsilon_k$
given in (\ref{ek}) and solve the second order ordinary
differential equation satisfied by $R(x)$,
\begin{equation}
R_{xx} + 4k^2\pi^2 R = 4k^2\pi^2 \frac{f''(\barh)}{f'(\barh)}
\cos^2(2k\pi x) - 2k^2\pi^2 \frac{f''(\barh)}{f'(\barh)},
\end{equation}
subject to
\begin{equation}
R(0)=R(1)\quad \textrm{and} \quad R_x(0)=R_x(1) 
\end{equation}
in $\hbox{ker}\, S$. We obtain 
\begin{equation}
R(x) = S^{-1}E[d^2G(w_1,w_1)] = -\frac{1}{6}\frac{f''(\barh)}{f'(\barh)} \cos(4k\pi x),
\end{equation}
and hence
\begin{eqnarray}\label{d2}
d^2G(w_1,S^{-1}E[d^2G(w_1,w_1)]) &=& d^2G\left(\cos(2k\pi x),-\frac{1}{6}\frac{f''(\barh)}{f'(\barh)} \cos(4k\pi x)\right) \nonumber\\
& = & f''(\barh)\cos(2k\pi x)\left(-\frac{1}{6}\frac{f''(\barh)}{f'(\barh)} \cos(4k\pi x)\right) \nonumber\\
&   & - \int_{0}^{1} f''(\barh)\cos(2k\pi x)\left(-\frac{1}{6}\frac{f''(\barh)}{f'(\barh)} \cos(4k\pi x)\right) \de{x} \nonumber\\
& = & - \frac{1}{6}\frac{[f''(\barh)]^2}{f'(\barh)} \cos(2k\pi x)\cos(4k\pi x).
\end{eqnarray}
In addition, from (\ref{drG}) we have
\begin{eqnarray}
d^3G(z_1,z_2,z_3)
& = & \left. \frac{\partial^3}{\partial t_1\partial t_2\partial t_3}G(t_1z_1+t_2z_2+t_3z_3)\right\vert_{t_1=t_2=t_3=0} \nonumber\\
& = & f'''(\barh)z_1z_2z_3 - \int_0^1 f'''(\barh)z_1z_2z_3 \, \de{x},
\end{eqnarray}
and therefore
\begin{eqnarray}\label{d3}
d^3G(\cos(2k\pi x),\cos(2k\pi x),\cos(2k\pi x))
& = & f'''(\barh)\cos^3(2k\pi x) - \int_{0}^{1} f'''(\barh)\cos^3(2k\pi x) \, \de{x} \nonumber\\
& = & f'''(\barh)\cos^3(2k\pi x).
\end{eqnarray}

Putting all of the information in (\ref{d2}) and (\ref{d3}) into (\ref{333}) we obtain
\begin{eqnarray}\label{g1xxx}
g_{1,xxx}
& = & \langle w_1^*, d^3G(w_1,w_1,w_1) - 3d^2G(w_1,S^{-1}E[d^2G(w_1,w_1)]) \rangle \nonumber\\
& = & \int_{0}^{1} \cos(2k\pi x)\left[f'''(\barh)\cos^3(2k\pi x)-3\left(-\frac{1}{6}\frac{[f''(\barh)]^2}{f'(\barh)} \cos(2k\pi x)\cos(4k\pi x)\right)\right]\, \de{x} \nonumber\\
& = & \frac{3}{8}f'''(\barh) + \frac{1}{8}\frac{[f''(\barh)]^2}{f'(\barh)}.
\end{eqnarray}

In addition,
$G_{\epsilon}(v)= v_{xx}$, so that $G_{\epsilon}(0)= 0$ at $v=0$, and hence we have
\begin{equation}
d^2G(w_k,S^{-1}EG_{\epsilon}(0))= 0.
\end{equation}
Furthermore,
since $dG_{\epsilon}(w)= w_{xx}$, from (\ref{222}) we obtain
\begin{eqnarray}\label{g1xe}
g_{1,x\epsilon}
& = & \langle w_1^*, dG_{\epsilon}(w_1)-d^2G(w_1, S^{-1}EG_{\epsilon}(0)) \rangle \nonumber\\
& = & \int_{0}^{1} \cos(2k\pi x) \left(-4\pi ^2 k^2 \cos(2k\pi x)\right)\, \de{x} \nonumber\\
& = & -2 k^2 \pi^2.
\end{eqnarray}

Referring to (\ref{g1g2}) and the argument following that equation,
the above analysis shows that as long as $f'(\barh)>0$ at
$\epsilon=\epsilon_k$ a circle of equilibria bifurcates from the
constant solution $h \equiv \barh$. The direction of bifurcation is
locally determined by the sign of $g_{1,xxx}$ given by
(\ref{g1xxx}). Hence, using $1/\epsilon$ as the bifurcation parameter,
the bifurcation of nontrivial equilibria is supercritical if
$g_{1,xxx}$ is negative and subcritical if it is positive.

By finding the values of $\barh$ where $g_{1,xxx}$ given by
(\ref{g1xxx}) with $f(h)$ given by (\ref{nl1}) is zero, we finally
obtain the following proposition:

% Proposition 1
\begin{prop} \label{prop1} Bifurcations of nontrivial solutions from
  the constant solution $h=\barh$ of the problem
  $(\ref{nl})$--$(\ref{mc})$ are supercritical if
  $1.289 < \barh < 1.747$ and subcritical if $1.259 < \barh <1.289$ or
  if $\barh > 1.747$.
\end{prop}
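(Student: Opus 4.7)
The plan is to exploit the formula for $g_{1,xxx}$ already derived in (\ref{g1xxx}) together with the $O(2)$-equivariant bifurcation principle recorded after (\ref{g1g2}): the bifurcation from $h\equiv\barh$ at $\epsilon=\epsilon_k$ is supercritical in $1/\epsilon$ when $g_{1,xxx}<0$ and subcritical when $g_{1,xxx}>0$, provided $f'(\barh)>0$ so that $\epsilon_k$ given by (\ref{ek}) is real. The entire proof is therefore reduced to locating the sign changes of a single explicit function of $\barh$.

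First I would differentiate $f(h)=1/h^6-1/h^3$ three times, writing the results in the factored forms
\begin{equation*}
f'(\barh)=3\barh^{-7}(\barh^3-2),\quad
f''(\barh)=6\barh^{-8}(7-2\barh^3),\quad
f'''(\barh)=12\barh^{-9}(5\barh^3-28).
\end{equation*}
The condition $f'(\barh)>0$ immediately gives $\barh>2^{1/3}\approx 1.2599$, which accounts for the lower threshold $1.259$ appearing in the statement. On this range, the sign of $g_{1,xxx}$ equals that of the numerator
\begin{equation*}
8f'(\barh)\,g_{1,xxx}=3f'''(\barh)f'(\barh)+[f''(\barh)]^2.
\end{equation*}
Substituting the factored expressions above, pulling out the common factor $\barh^{-16}$, and setting $u=\barh^3$, this reduces (after a short but careful simplification) to a positive multiple of the quadratic
\begin{equation*}
Q(u)=19u^2-142u+217.
\end{equation*}

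Solving $Q(u)=0$ by the quadratic formula gives $u_{\pm}=(142\pm\sqrt{3672})/38$; taking cube roots produces the two critical values $\barh_-=u_-^{1/3}\approx 1.289$ and $\barh_+=u_+^{1/3}\approx 1.747$ recorded in the statement. Since $Q$ opens upward, $Q(u)<0$ (hence $g_{1,xxx}<0$, supercritical) exactly on $u_-<u<u_+$, i.e.\ $1.289<\barh<1.747$, while $Q(u)>0$ (hence $g_{1,xxx}>0$, subcritical) on $u<u_-$ or $u>u_+$, intersected with the admissible range $\barh>2^{1/3}$. Combining the two cases yields the subcritical intervals $1.259<\barh<1.289$ and $\barh>1.747$, completing the proof. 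The only real obstacle is bookkeeping in the algebraic reduction of $3f'''f'+(f'')^2$ to a polynomial in $u=\barh^3$; once the common power $\barh^{-16}$ has been extracted, everything becomes a quadratic whose roots can be computed exactly and then evaluated numerically to three decimal places.
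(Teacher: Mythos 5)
Your proposal is correct and follows essentially the same route as the paper: both reduce the claim to the sign of $g_{1,xxx}$ from (\ref{g1xxx}) on the admissible range $f'(\barh)>0$, i.e.\ $\barh>2^{1/3}$, and your quadratic $19u^2-142u+217$ in $u=\barh^3$ is exactly one third of the numerator $57\barh^6-426\barh^3+651$ appearing in the paper's displayed expression for $g_{1,xxx}$, with the same roots $\barh\approx 1.289$ and $\barh\approx 1.747$. The only difference is that you show the algebraic reduction explicitly, whereas the paper states the resulting rational function directly.
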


\begin{proof}
The constant solution $h \equiv \barh$ will lose
stability as $\epsilon$ is decreased only if $f'(\barh) >0$. {\it i.e.} if
$-6/{\barh}^7 + 3/{\barh}^4 > 0$, for $\barh > 2^{1/3} \approx
1.259$. From (\ref{g1xxx}) we have that
\begin{equation}
g_{1,xxx} =   \frac{57 \barh^6-426 \barh^3+651}{2\barh^9(\barh^3-2)},
\end{equation}
so that $g_{1,xxx}<0$ if $\barh \in (1.289, 1747)$ giving the
result.
\end{proof}

For $\barh \leqslant 2^{1/3}$ there are no bifurcations from the
constant solution $h = \barh$. Furthermore, we have the following
proposition:

% Proposition 2
\begin{prop}\label{prop2}
The problem $(\ref{nl})$--$(\ref{mc})$ has no nontrivial solutions when $\barh \leq 1$.
\end{prop}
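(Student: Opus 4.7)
The plan is to prove the proposition in two stages: first a pointwise bound $h(x)\le 1$ forced by the mean constraint $\barh\le 1$, and then an energy-type identity combined with the monotonicity of $f$ on $(0,1]$ to conclude that $h$ must be constant.

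First, I would rewrite (\ref{nl}) as $\epsilon^{2}h_{xx}=c-f(h)$, where $c:=\int_{0}^{1}f(h)\,\de x$ is the (unknown) constant determined by the solution itself. At any maximum point one has $h_{xx}\le 0$ and at any minimum point $h_{xx}\ge 0$, so $f(h_{\min})\le c\le f(h_{\max})$. Suppose, for contradiction, that $h_{\max}>1$. Since $\barh\le 1<h_{\max}$ and $h$ is continuous with mean $\barh$, one has $h_{\min}<h_{\max}$, and the case $h_{\min}=1$ is ruled out because $h\ge 1$ together with $\int_{0}^{1}h\,\de x\le 1$ would force $h\equiv 1$, contradicting $h_{\max}>1$; hence $h_{\min}<1$. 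Because $f(h)=(1-h^{3})/h^{6}$ is strictly positive on $(0,1)$ and strictly negative on $(1,\infty)$, this gives $c\ge f(h_{\min})>0$, and so $f(h_{\max})\ge c>0$, which forces $h_{\max}<1$ — a contradiction. Thus $h(x)\le 1$ for every $x\in[0,1]$.

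Next I would multiply (\ref{nl}) by $h-\barh$ and integrate over $(0,1)$. Using the periodic boundary conditions (\ref{nl2}) and the fact that $h-\barh$ has zero mean (so the $c$-term drops out), integration by parts yields
\begin{equation*}
\epsilon^{2}\int_{0}^{1}h_{x}^{2}\,\de x \;=\; \int_{0}^{1}(h-\barh)\bigl(f(h)-f(\barh)\bigr)\,\de x.
\end{equation*}
By Step~1 both $h$ and $\barh$ lie in $(0,1]\subset(0,2^{1/3})$, and on this interval $f'(h)=3(h^{3}-2)/h^{7}<0$, so $f$ is strictly decreasing there. Hence the integrand on the right is pointwise non-positive while the left side is non-negative, so both vanish, $h_{x}\equiv 0$, and $h\equiv\barh$.

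The only delicate point is Step~1: $f$ is not monotone on all of $(0,\infty)$ (it has a minimum at $2^{1/3}$), so a direct monotonicity/maximum-principle argument on the whole positive axis is not available; the trick is to use only the \emph{sign pattern} of $f$ together with the constraint $\barh\le 1$ to confine every solution to the interval $(0,1]$. Once that is achieved, the remaining rigidity estimate in Step~2 is routine.
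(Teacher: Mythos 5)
Your proof is correct, but it follows a genuinely different route from the paper's. The paper argues in a single step at the global minimum: for a nonconstant solution with $\barh\le 1$ one has $h(x_0)=h_{\min}<1$, and since $f$ is positive and decreasing on $(0,1]$ while $f<0$ on $(1,\infty)$, the pointwise bound $f(h(x))\le f(h_{\min})$ holds with strict inequality on a set of positive measure, so $\epsilon^{2}h_{xx}(x_0)=\int_0^1 f(h)\,\de x-f(h_{\min})<0$ and $x_0$ cannot be a minimum. You instead extract from the two extrema only the sandwich $f(h_{\min})\le c\le f(h_{\max})$ and combine it with the sign pattern of $f$ across $h=1$ to obtain the a priori bound $h\le 1$, and then finish with an energy identity plus the strict monotonicity of $f$ on $(0,2^{1/3})$ to force $h\equiv\barh$. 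What your version buys is that the one inequality the paper asserts without justification --- that the average of $f(h)$ over the whole, possibly large, range of $h$ is strictly smaller than $f(h_{\min})$ --- is replaced by two elementary and fully spelled-out facts (the sign of $f$ on either side of $h=1$ and $f'<0$ below $2^{1/3}$), at the cost of an extra integration by parts; the paper's argument is more economical but leaves that comparison to the reader. Both arguments are sound, and both rely on exactly the same structural features of $f$.
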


\begin{proof} Assume that such a nontrivial solution exists. Then,
since $\barh \leq 1$, its global minimum, achieved at some point $x_0
\in (0,1)$, must be less than 1. (We may take the point $x_0$ to be an
interior point by translation invariance.) But then
\begin{equation}
\epsilon^2 h_{xx}(x_0) = \int_0^1 f(h)\, \de{x} - f(h(x_0)) < 0,
\end{equation}
so the point $x_0$ cannot be a minimum.
\end{proof}

\section{Two-Parameter Continuation of Solutions in the
  Spatially Homogeneous Case}
\label{section4}

To describe the change in the global structure of branches of
steady state solutions of the problem (\ref{nl})--(\ref{mc}) as $\barh$
and $\epsilon$ are varied, we use AUTO \cite{AUTO2009} and our results
are summarised in Figure \ref{fig:1}.

As Figure \ref{fig:1} shows, a curve of saddle-node (SN)
bifurcations which originates from $\barh \approx 1.289$ at
$1/\epsilon \approx 23.432$ satisfies $\barh \rightarrow
1^+$ as $1/\epsilon \rightarrow \infty$, while  a curve of SN
bifurcations which originates from $\barh \approx 1.747$, $1/\epsilon
\approx 13.998$ satisfies $\barh \rightarrow \infty$ as
$1/\epsilon \rightarrow \infty$.

% Figure 1
\begin{figure}[H] 
\begin{center}
\includegraphics[width=0.7\linewidth]{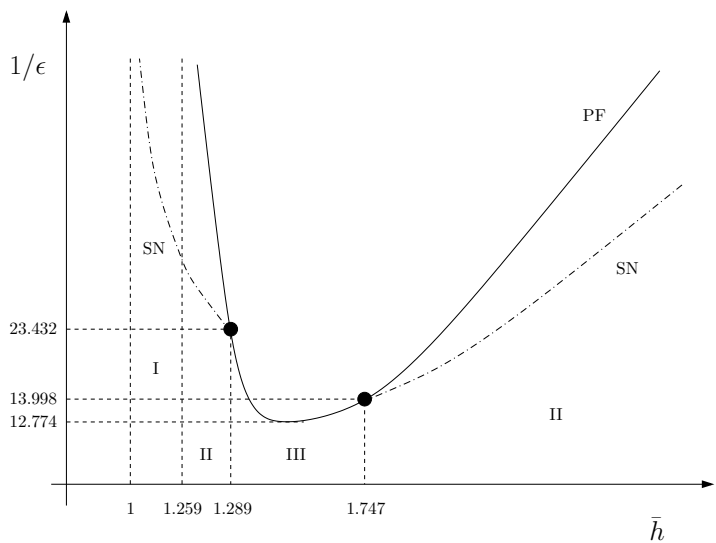}
\caption{The global structure of branches of steady state solutions
  with a unique maximum, including both saddle-node (SN) (shown with
  dash-dotted curves) and pitchfork (PF) bifurcation branches (shown
  with solid curves). The nucleation regime
  $1 < \barh < 2^{1/3} \approx 1.259$ (Regime I), the metastable
  regime $2^{1/3} < \barh < 1.289$ and $\barh > 1.747$ (Regime II),
  and the unstable regime $1.289 < \barh < 1.747$ (Regime III) are
  also indicated.}
\label{fig:1}
\end{center}
\end{figure}

Figure \ref{fig:1} identifies three different
bifurcation regimes, denoted by I, II and III, with differing
bifurcation behaviour occurring in the different regimes, namely
(using the terminology of \cite{Eilbeck1989} in the context of the
Cahn-Hilliard equation):
\begin{itemize}
\item
a ``nucleation'' regime for $1 < \barh < 2^{1/3} \approx 1.259$ (Regime I),
\item
a ``metastable'' regime for $2^{1/3} < \barh < 1.289$ and $\barh >
1.747$ (Regime II), and
\item
an ``unstable'' regime for $1.289 < \barh < 1.747$ (Regime III).
\end{itemize}

In Regime I, the constant solution $h(x)\equiv\barh$ is linearly
stable which follows from analysing the spectrum of the operator $S$
for $f'(\barh)<0$ in (\ref{eqS}) and (\ref{bcS}), but under
sufficiently large perturbations the system will evolve to a
non-constant steady state solution. See \cite{LP2000a} for an
extensive discussion of the stability analysis of steady state
solutions to thin-film equations.

In Regime II, as $\epsilon$ is decreased the constant solution $h(x)
\equiv \barh$ loses stability through a subcritical bifurcation.

In Regime III, as $\epsilon$ is decreased, the
constant solution $h(x) \equiv \barh$ loses stability through a
supercritical bifurcation.

% Section 6

\section{The Spatially Non-Homogeneous Case}
\label{section6}

Honisch \etal \cite{Honisch2015} chose the Derjaguin disjoining
pressure $\Pi(h,x,y)$ to be of the form
\begin{equation}\label{LDP}
\Pi(h,x,y) = \left( \frac{1}{h^6}-\frac{1}{h^3} \right) (1 + \rho\, G(x,y)),
\end{equation}
where the function $G(x,y)$ models the non-homogeneity of the
substrate and the parameter $\rho$, which can be either positive or
negative, is called the ``wettability contrast''.
Following Honisch \etal \cite{Honisch2015},
in the remainder of the present work we consider the specific case
\begin{equation}\label{G}
G(x,y) = \sin \left(2\pi x \right) := G(x),
\end{equation}
corresponding to an $x$-periodically patterned ({\it i.e.} striped)
substrate.

There are, however, some difficulties in accepting (\ref{LDP}) as a
physically realistic form of the disjoining pressure for a
non-homogeneous substrate. The problems arise because the two terms in
(\ref{LDP}) represent rather different physical effects. Specifically,
since the $1/h^6$ term models the short-range interaction amongst the
molecules of the liquid and the $1/h^3$ term models the long-range
interaction, assuming that {\it both} terms reflect the patterning in
the substrate in {\it exactly the same way} through their dependence
on the {\it same} function $G(x,y)$ does not seem very plausible.
Moreover, there are other studies which assume that the wettability of
the substrate is incorporated in either the short-range interaction
term (see, for example, Thiele and Knobloch \cite{Thiele2006} and
Bertozzi \etal \cite{Bertozzi2001}) or the long-range interaction term
(see, for example, Ajaev \etal \cite{Ajaev2011} and Sharma \etal
\cite{Sharma2003}), but not both simultaneously.
Hence in what follows we will consider the two cases
$\Pi(h,x) = \Pilr(h,x)$ and $\Pi(h,x) = \Pisr(h,x)$,
where LR stands for ``long range'' and SR stands for ``short range'' where 
\begin{equation}\label{DPlr}
  \Pilr(h,x) = \frac{1}{h^6}-\frac{1}{h^3}(1+\rho G(x))
\end{equation}
and
\begin{equation}\label{DPsr}
\Pisr(h,x) = \frac{1}{h^6}(1+\rho G(x)) -\frac{1}{h^3}, 
\end{equation}
in both of which $G(x)$ is given by (\ref{G}) and $\rho$ is the
wettability contrast.

For small wettability contrast, $\vert\rho\vert \ll 1$, we do not
expect there to be significant differences between the influence of
$\Pilr$ or $\Pisr$ on the bifurcation diagrams, as these results
depend only on the nature of the bifurcation in the
homogeneous case $\rho=0$ and on the symmetry groups under which the
equations are invariant.
To see this, consider the spatially non-homogeneous version of
(\ref{nl}), that is, the boundary value problem
\begin{equation}\label{nln}
\epsilon^2 h_{xx} + f(h,x) - \int_0^1 f(h,x) \, \de{x} = 0,\; \; 0<x<1,
\end{equation}
where now
\begin{equation}\label{nl1n}
f(h,x)=\Pilr(h,x) \quad \hbox{or} \quad f(h,x)= \Pisr(h,x). 
\end{equation}
subject to the periodic boundary conditions and the volume constraint,
\begin{equation}\label{nl2n}
h(0)=h(1), \quad h_x(0)=h_x(1), \quad \hbox{and} \quad \int_0^1 h(x) \, \de{x} = \barh.
\end{equation}

Seeking an asymptotic solution to (\ref{nln})--(\ref{nl2n}) in the
form $h(x) = \barh + \rho h_1(x) + O(\rho^2)$ in the limit
$\rho \to 0$, by substituting this anzatz into (\ref{nln}) we find
that in the case of $\Pilr(h,x)$ we have
\begin{equation}\label{h1-lr}
h_1(x) = \frac{S\barh^4\sin\left(2\pi x\right)}{4\pi^2\barh^7\epsilon^2
  -3\barh^3 + 6},
\end{equation}
while in the case of $\Pisr(h,x)$ the corresponding result is
\begin{equation}\label{h1-sr}
  h_1(x) = \frac{\barh\sin\left(2\pi x\right)}{4\pi^2\barh^7\epsilon^2
  - 3\barh^3 + 6}.
\end{equation}

For non-zero values of $\rho$, in both the $\Pilr$ and $\Pisr$ cases,
the changes in the bifurcation diagrams obtained in the homogeneous
case ($\rho=0$) are an example of forced symmetry breaking (see, for
example, Chillingworth \cite{Chillingworth1985}), which we discuss
further in Appendix \ref{sb}. More precisely, we show there that when
$\rho \neq 0$, out of the entire $O(2)$ orbit only two equilibria are
left after symmetry breaking.

\begin{figure}[H]
\begin{center}
        \includegraphics[height=0.55\textheight]{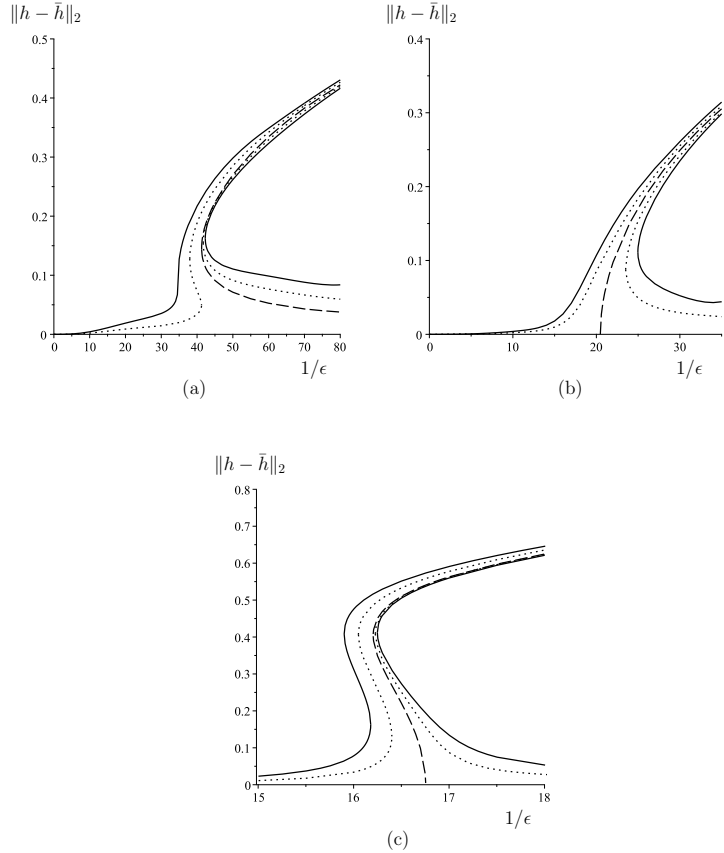}
        \caption{ Bifurcation diagrams of solutions with a unique
          maximum, showing $\|h-\barh \|_2$ plotted
          as a function of $1/\epsilon$ when the disjoining pressure
          is $\Pilr$ for $\rho=0$ (dashed curves), $\rho=0.005$ (dotted
          curves) and $\rho=0.05$ (solid curves) for (a) $\barh=1.24$,
          (b) $\barh=1.3$, and (c) $\barh=2$, corresponding to Regimes
          I, III, and II, respectively.}
        \label{fig:2}
\end{center}        
\end{figure} 
Figure \ref{fig:2} shows how for small wettability contrast
$\vert\rho\vert \ll 1$, the resulting spatial non-homogeneity
introduces imperfections \cite{Golubitsky1985} in the bifurcation
diagrams of the homogeneous case $\rho=0$ discussed in
Section~\ref{section4}.  It presents bifurcation diagrams in Regimes
I, II and III when the disjoining pressure $\Pilr$ is given by
(\ref{DPlr}) for a range of small values of $\rho$ together with the
corresponding diagrams in the case $\rho=0$. The corresponding
bifurcation diagrams when the disjoining pressure $\Pisr$ is given by
(\ref{DPsr}) are very similar and hence are not shown here.

For large wettability contrast, specifically for
$\vert\rho\vert \geq 1$, significant differences between the two forms
of the disjoining pressure are to be expected. When using $\Pilr$, one
expects global existence of positive solutions for all values of
$\vert\rho\vert$; see, for example, Wu and Zheng \cite{Wu2007}. On the
other hand, when using $\Pisr$, there is the possibility of rupture of
the liquid film for $\vert\rho\vert \geq 1$; see, for example,
\cite{Bertozzi2001,Wu2007}, which means in this case we do not expect
positive solutions for sufficiently large values of $\vert\rho\vert$.

\begin{figure}[H]
\begin{center}
\includegraphics[height=0.37\textheight]{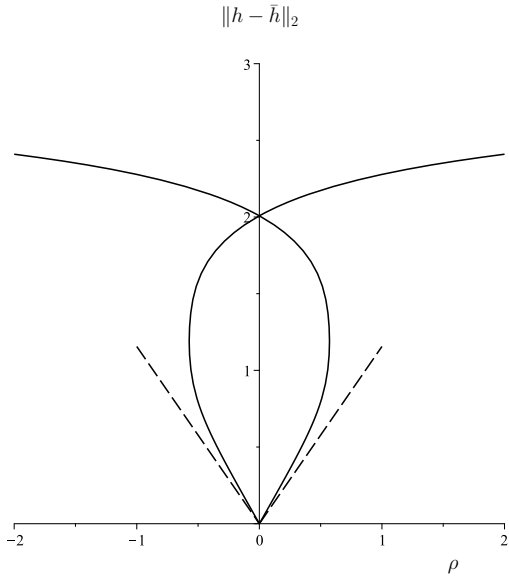}
\caption{Bifurcation diagram for steady state solutions with a unique
  maximum showing $\|h-\barh\|_2$ plotted as a function of $\rho$ when
  the disjoining pressure is $\Pilr$, $\barh=3$ and
  $1/\epsilon=50$. The leading-order dependence of $\|h-\barh\|_2$ on
  $\rho$ as $\rho \to 0$, given by (\ref{h1-lr}), is shown with dashed
  lines.}
\label{fig:3}
\end{center}
\end{figure}

In Figure \ref{fig:3} we plot the branches of the positive solutions
of (\ref{nln})--(\ref{nl2n}) with a unique maximum when the disjoining
pressure is $\Pilr$ for $\barh=3$ and $1/\epsilon=50$, so that when
$\rho=0$, we are in Regime II above the curve of pitchfork (PF)
bifurcations from the constant solution shown in  Figure~\ref{fig:1}. The
work of Bertozzi \etal \cite{Bertozzi2001} and of Wu and Zheng
\cite{Wu2007}, shows that strictly positive solutions exist for all
values of $|\rho|$, {\it i.e.} beyond the range $\rho \in [-2,2]$ for which we have 
performed the numerical calculations.

Figure \ref{fig:4} shows that the situation is different when the
disjoining pressure is $\Pisr$ (with the same values of $\barh$ and
$\epsilon$). At $\vert\rho\vert=1$, the upper branches of solutions
disappear due to rupture of the film, so that at some point
$x_0 \in [0,1]$ we have $h(x_0)=0$ and a strictly positive solution no
longer exists, while the other two branches coalesce at a saddle node
bifurcation at $\vert\rho\vert \approx 5.8$.

Note that in Figures \ref{fig:3} and \ref{fig:4}, the non-trivial
``solution'' on the axis $\rho=0$ is, in fact, a whole $O(2)$-symmetric
orbit of solutions predicted by the analysis leading to Figure~\ref{fig:1}.

\begin{figure}[H]
\begin{center}
\includegraphics[height=0.33\textheight]{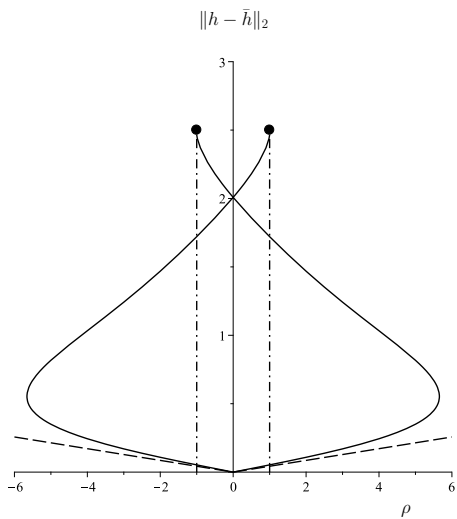}
\caption{Bifurcation diagram showing $\|h-\barh\|_2$ plotted as a
  function of $\rho$ when the disjoining pressure is $\Pisr$, for
  $\barh=3$ and $1/\epsilon=50$. The leading order dependence of
  $\|h-\barh\|_2$ on $\rho$ as $\rho \to 0$, given by
  (\ref{h1-sr}), is shown with dashed lines. Note that the upper
    branches of solutions cannot be extended beyond $|\rho|=1$
    (indicated by filled circles).}
\label{fig:4}
\end{center}
\end{figure}

Note that when the disjoining pressure is $\Pisr$, given by
(\ref{DPsr}), we were unable to use AUTO to continue branches of solutions
beyond the rupture of the film at $\vert\rho\vert=1$.
\begin{figure}[H]
\begin{center}
\includegraphics[width=0.40\textwidth]{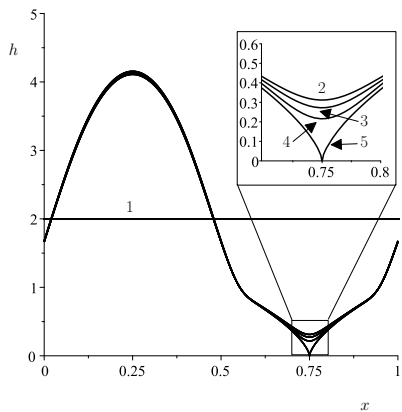}
\caption{Solutions $h(x)$ when the disjoining pressure is $\Pisr$ for
  $\barh=2$ and $1/\epsilon=30$ for $\rho=0$, 0.97, 0.98, 0.99 and 1,
  denoted by ``1'', ``2'', ``3'', ``4'' and ``5'', respectively,
  showing convergence of strictly positive solutions to a 
  non-strictly positive one as $\rho \to 1^-$.}
\label{fig-touchdown1}
\end{center}
\end{figure}

Figure~\ref{fig-touchdown1} shows the film approaching rupture as
$\rho \to 1^-$ at the point where the coefficient of the short-range
term disappears when $\rho=1$, \ie when $1 + \sin(2\pi x) = 0$ and
hence at $x=3/4$. These numerical results are consistent with the
theoretical arguments of Bertozzi \etal \cite{Bertozzi2001}.

Investigation of the possible leading-order balance in
(\ref{nl}) when the disjoining pressure is $\Pisr$ and $\rho=1$
in the limit $x \to 3/4$ reveals that $h(x) = O\left((x-3/4)^{2/3}\right)$;
continuing the analysis to higher order yields
\begin{equation}\label{nl15}
  h = (2\pi^2)^{\frac{1}{3}}\left(x-\frac{3}{4}\right)^{\frac{2}{3}}
  - \frac{4\left(4\pi^{10}\right)^{\frac13} \epsilon^2}{27} \left(x-\frac{3}{4}\right)^{\frac{4}{3}}
  + O\left(\left(x-\frac{3}{4}\right)^2\right).
\end{equation}

Figure \ref{fig-touchdown2} shows the detail of the excellent
agreement between the solution $h(x)$ when $\rho=1$ and the two-term
asymptotic solution given by (\ref{nl15}) close to 
$x=3/4$.

\begin{figure}[H]
\begin{center}
\hspace*{-0cm}
\includegraphics[width=0.45\textwidth]{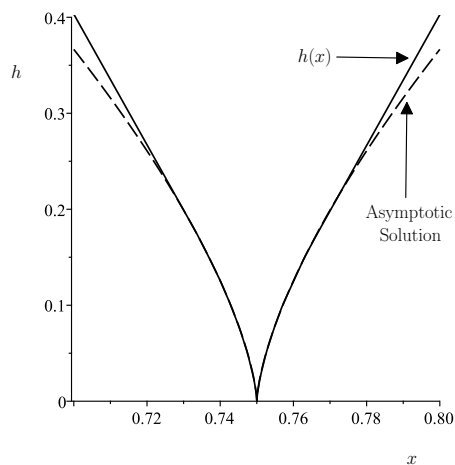}
\caption{ Detail near $x=3/4$ of the solution $h(x)$ shown with
  solid line when the disjoining pressure is $\Pisr$ and
  $\rho=1$, and the two-term asymptotic solution given by (\ref{nl15})
  shown with dashed lines for $\barh=2$ and $\epsilon=1/30$.  }
\label{fig-touchdown2}
\end{center}
\end{figure}

Figures~\ref{MOS-for-form1} and \ref{MOS-for-form2}
show the multiplicity of solutions with a unique maximum for the
disjoining pressures $\Pilr$ and $\Pisr$, respectively, in the
$(1/\epsilon,\rho)$-plane in the three regimes shown in Figure \ref{fig:1}.

\begin{figure}[H]
  \centering
  \includegraphics[width=\linewidth,height=0.8\textheight]{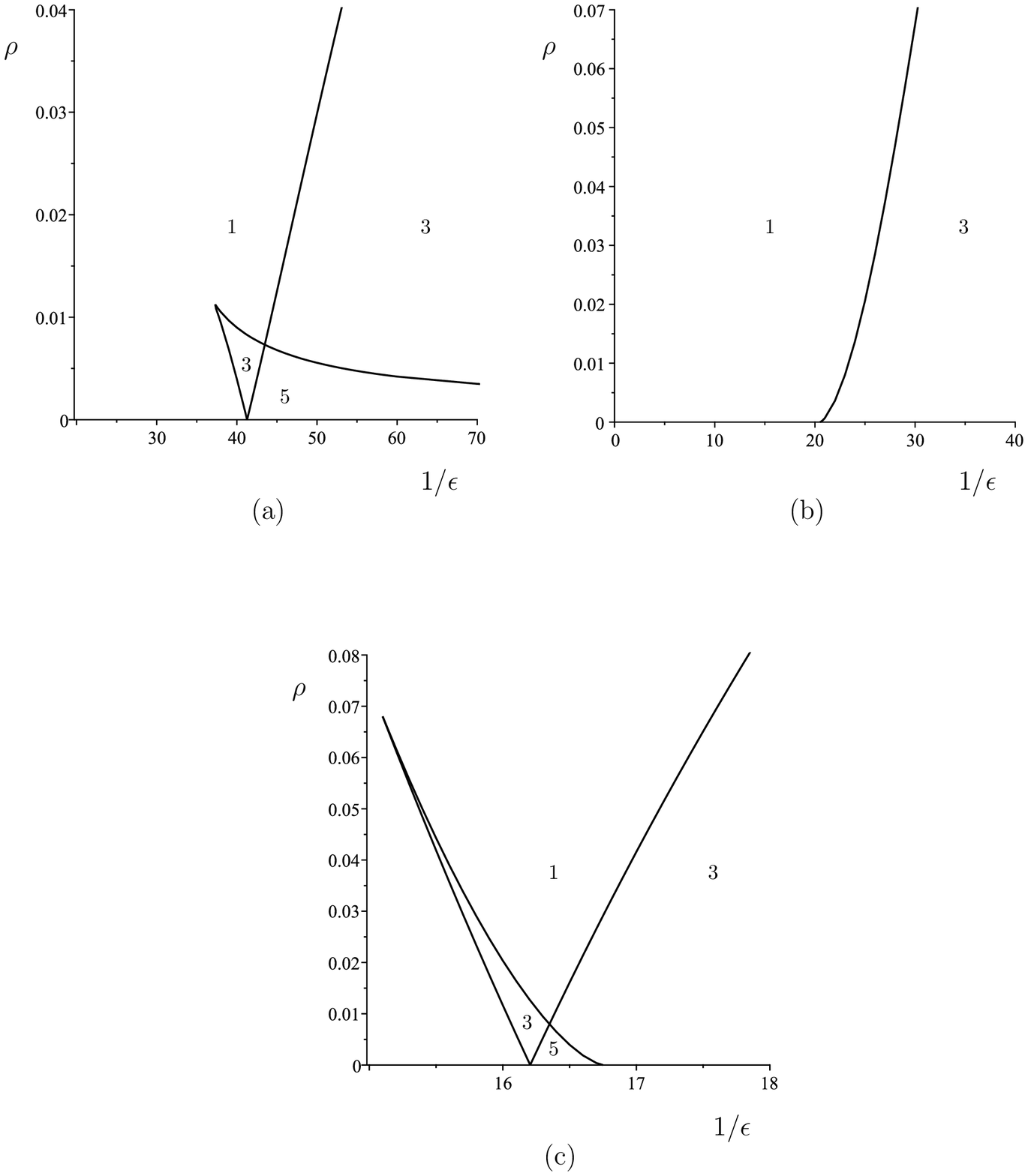}
    \caption{Multiplicity of strictly positive solutions
      with a unique maximum in the $(1/\epsilon,\rho)$-plane when the
      disjoining pressure is $\Pilr$ for (a) $\barh=1.24$ (Regime I),
      (b) $\barh=1.3$ (Regime III), and (c) $\barh=2$ (Regime
      II).
      }
 \label{MOS-for-form1}
\end{figure} 
\begin{figure}[H]
  \centering
  \includegraphics[width=\linewidth,height=0.75\textheight]{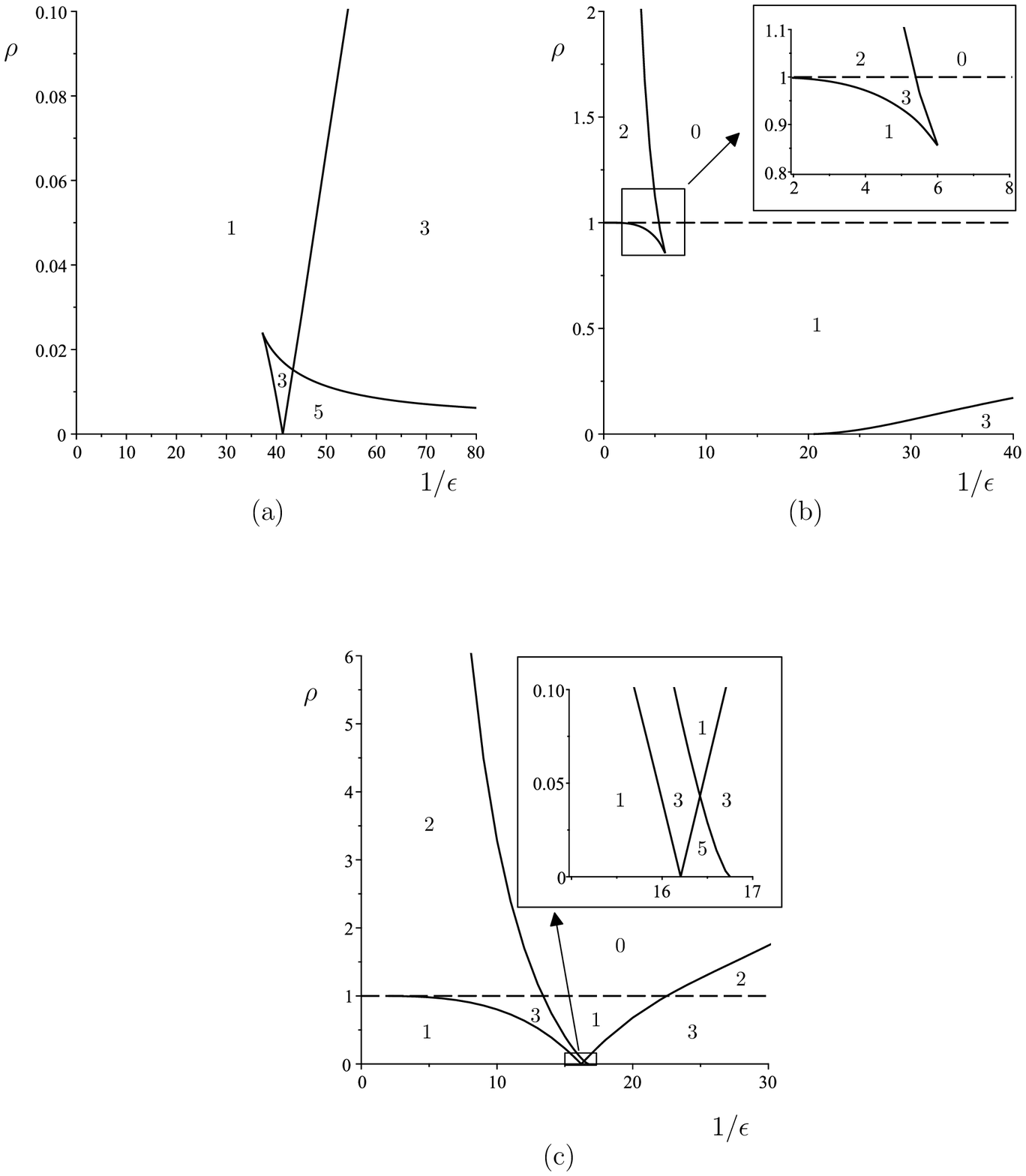}
        \caption{Multiplicity of strictly positive
          solutions with a unique maximum in the
          $(1 / \epsilon, \rho)$-plane when the disjoining pressure is
          $\Pisr$ for (a) $\bar{h}=1.24$ (Regime I), (b) $\bar{h}=1.3$
          (Regime III), and (c) $\bar{h}=2$ (Regime II).
          }
 \label{MOS-for-form2}
\end{figure} 
In Figure~\ref{MOS-for-form2} the horizontal dashed line at $\rho =1$
indicates rupture of the film and loss of a smooth strictly positive
solution, showing that there are regions in parameter space where no
such solutions exist.

Let us explain in detail the interpretation of
Figure~\ref{MOS-for-form1}(c); all of the other parts of
Figure~\ref{MOS-for-form1} and Figure~\ref{MOS-for-form2} are
interpreted in a similar way.

\begin{figure}[H]
\begin{center}
\hspace*{-0cm}
\includegraphics[width=0.45\textwidth]{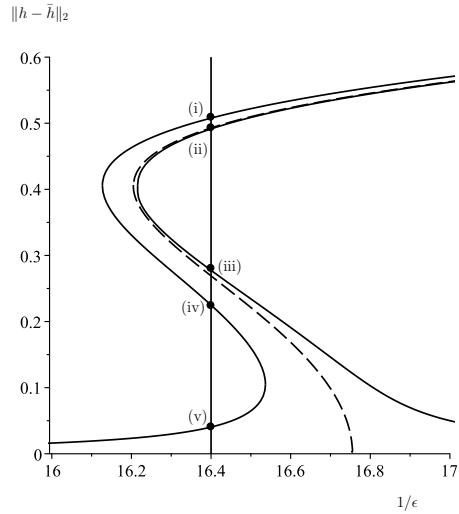}
\caption{Bifurcation diagram of steady state solutions with
  $\barh = 2$ (Regime II) for $\rho=0$ (dashed curves) and
  $\rho=0.005$ (solid curves) indicating the
  different branches of steady state solutions.}
\label{MOS-new}
\end{center}
\end{figure}
Each curve in Figure~\ref{MOS-for-form1}(c) is a curve of saddle-node
bifurcations. Similar to Figure~\ref{fig:2}(c), Figure~\ref{MOS-new}
shows the bifurcation diagram of steady state solutions with
$\barh = 2$ (Regime II) for $\rho=0$ and $\rho=0.005$. As $1/\epsilon$
is increased, we pass successively through saddle-node bifurcations
with the multiplicity of the steady state solutions changing from 1 to 3
to 5 and then back to 3 again. In Figure~\ref{5statsolns}, we plot the
five steady state solutions with a unique maximum indicated in
Figure~\ref{MOS-new} by (i)--(v).
%Please note the location of the
%unique maximum on each branch.

\begin{figure}[H]
\begin{center}
\hspace*{-0cm}
\includegraphics[width=0.45\textwidth]{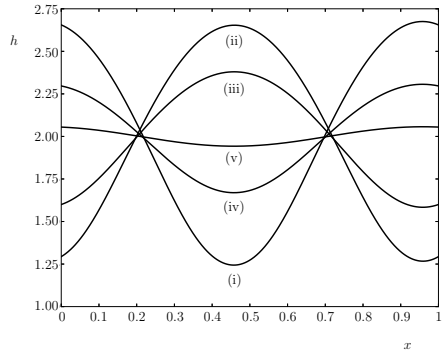}
\caption{Steady state solutions on the five branches of solutions
  indicated in Figure~\ref{MOS-new} by (i)--(v).}
\label{5statsolns}
\end{center}
\end{figure}

%%%%%%%%%%%%%%%%%%%%%%%%%%%%%%%%%%%%%%%%%%%%%%

% Section 7
\section{Conclusions}
\label{section7}

In the present work we have investigated the steady state solutions of
the thin-film evolution equation \eqref{EQ1} both in the
spatially homogeneous case (\ref{nl})--(\ref{nl2}) and in the
spatially non-homogeneous case for two choices of spatially
non-homogeneous Derjaguin disjoining pressure given by (\ref{DPlr})
and (\ref{DPsr}). We have given a physical motivation for our choices
of the disjoining pressure. For reasons explained in the last
paragraph of Section~\ref{section2}, we concentrated on
branches of solutions with a unique maximum.

In the spatially homogeneous case (\ref{nl})--(\ref{mc}), we used the
Liapunov-Schmidt reduction of an equation invariant under the action
of the $O(2)$ symmetry group to obtain local bifurcation results and
to determine the dependence of the direction and nature of bifurcation
in bifurcation parameter $1/\epsilon$ on the average film thickness
$\bar{h}$; our results on the existence of three different bifurcation
regimes, (namely nucleation, metastable, and unstable) are summarised
in Propositions \ref{prop1} and \ref{prop2} and in Figure~\ref{fig:1} obtained using AUTO.

In the spatially non-homogeneous case (\ref{nln})--(\ref{nl2n}), we
clarified the $O(2)$ symmetry breaking phenomenon (see Appendix
\ref{sb}) and presented imperfect bifurcation diagrams in
Figure~\ref{fig:2} and global bifurcation diagrams using the
wettability contrast $\rho$ as a bifurcation parameter for fixed
$\epsilon$ and $\barh$ in Figures~\ref{fig:3} and \ref{fig:4}.

Let us discuss Figure~\ref{fig:3} in more detail; for convenience, it
is reproduced in Figure~\ref{fig:C1} with labelling added to the
different branches of strictly positive steady state solutions with a
unique maximum. Below we explain what these different labels mean.

\begin{figure}[H]
  \centering
  \includegraphics[height=0.35\textheight]{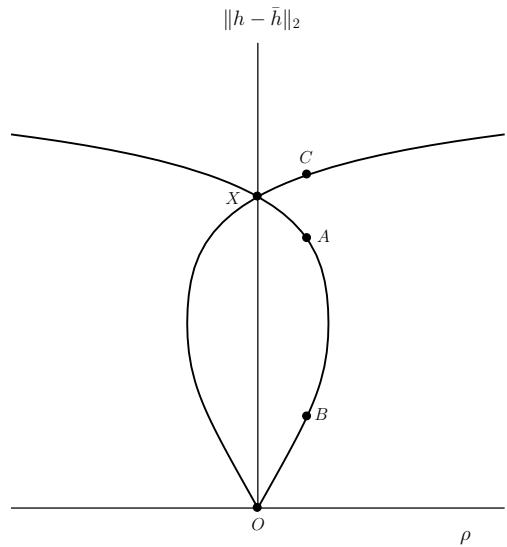}
  \caption{A sketch of the bifurcation diagram plotted in
    Figure~\ref{fig:3} with the different solution branches
    labelled.} \label{fig:C1}
\end{figure}

Our results can be described using the language of global compact
attractors. For more information on attractors in dissipative
  infinite-dimensional dynamical systems please see the monograph of
  Hale \cite{Hale10} and Wu and Zheng \cite{Wu2007} for applications
  in the thin-film context. In systems such as (\ref{qua}), the global
  compact attractor of the PDE is the union of equilibria and their
  unstable manifolds. In Figures~\ref{fig:C2} and \ref{fig:C3} we
  sketch the global attractor of (\ref{qua}) for $\epsilon=1/50$ and
  $\barh=3$, the values used to plot Figure~\ref{fig:3}. For
  these values of the parameters the attractor is two-dimensional and we
  sketch its projection onto a plane.

\begin{figure}[H]
  \centering
    \includegraphics[width=0.4\linewidth]{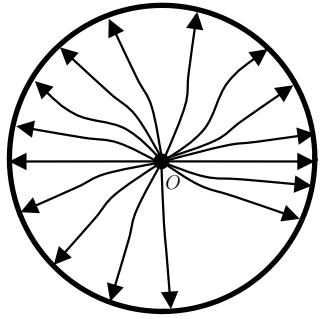}
    \caption{Sketch of the global attractor for $\rho=0$. The
      circle represents the $O(2)$ orbit of steady state solutions and $O$
      represents the constant solution $h(x)=\barh$.} \label{fig:C2}
\end{figure}
\begin{figure}[H]
  \centering
    \includegraphics[width=0.4\linewidth]{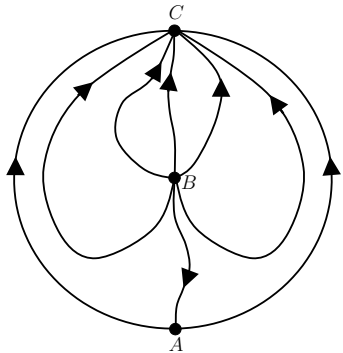}
    \caption{Sketch of the global attractor for small non-zero values
      of $|\rho|$. The points $A$, $B$, $C$ correspond to the steady
      state solutions labelled in Figure~\ref{fig:C1}.} \label{fig:C3}
\end{figure}

When $\rho=0$, for $1/\epsilon=50$, the attractor is two-dimensional;
the constant solution $h \equiv \barh$ denoted by $O$ has a
two-dimensional unstable manifold and $X$ corresponds to a whole
$O(2)$ orbit of steady state solutions.  A sketch of the attractor in
this case is shown in Figure~\ref{fig:C2}.
% We used a thicker circle in
% that Figure to indicate that it is comprised of steady state
% solutions. 

When $|\rho|$ takes small positive values, only two steady state
solutions, denoted by $A$ and $C$ remain from the entire $O(2)$
orbit, as discussed in Appendix \ref{sb}, while the constant
solution continues to $B$ without change of stability. The resulting
attractor is sketched in Figure~\ref{fig:C3}.

Increasing $|\rho|$ causes the steady state solutions $B$ and $C$ to
coalesce in a saddle node bifurcation, so that the
attractor degenerates to a single asymptotically stable
steady state solution. It would be interesting to understand why this collision
of steady state solution branches occurs.

We have also explored the geometry of film rupture which occurs as
$\rho \to 1^-$ when the disjoining pressure is given by $\Pisr$; this
phenomenon is shown in detail in Figures~\ref{fig-touchdown1} and
\ref{fig-touchdown2}.

Finally, in Figures \ref{MOS-for-form1} and \ref{MOS-for-form2}, we
showed the results of a two-parameter continuation study in the
$(1/\epsilon, \rho)$ plane, showing how the multiplicity of positive
steady state solutions changes as parameters are varied, and, in
particular, indicating in the case of disjoining pressure $\Pisr$ shown
in Figure~\ref{MOS-for-form2} regions in parameter space where no such
solutions exist. We conjecture that in these regions the solution of
the unsteady problem with any positive initial condition converges to
a weak solution of the thin-film equation with regions in which
$h(x)=0$, {\it i.e.} solutions with film rupture. For a
discussion of such (weak) solutions of thin-film equations in the
homogeneous case the reader is referred to the work of Laugesen and
Pugh \cite{LP2000b}.

In the case of disjoining pressure $\Pisr$, we could not use the
AUTO-07p version of AUTO to continue branches of solutions beyond
rupture. It would be an interesting project to develop such a
capability for this powerful and versatile piece of software.

Figures \ref{MOS-for-form2}(b) and \ref{MOS-for-form2}(c) provide
numerical evidence for the existence of a curve of saddle-node
bifurcations converging to the point $(0,1)$ in the
$(1/\epsilon, \rho)$ plane; an explanation for this feature of the
global bifurcation diagrams requires further study.

To summarise: our study was primarily motivated by the work of Honisch
\etal \cite{Honisch2015}. While we have clarified the mathematical
properties of (\ref{nl})--(\ref{nl2}) and (\ref{nln})--(\ref{nl2n}),
so that the structure of bifurcations in Figure 3(a) of that paper for
non-zero values of $\rho$ is now understood, many of their other
numerical findings are still to be explored mathematically.  For
example, the stability of ridge solutions shown in their Figure 5 in
the context of the full two-dimensional problem of a substrate with
periodic wettability stripes. There is clearly much work still to be
done on heterogeneous substrates with more complex wettability
geometry.

A final remark that might be of interest to the reader is that the
solutions of equations (\ref{nln})--(\ref{nl2n}), the steady state
  solutions of (\ref{qua}), a degenerate quasi-linear fourth-order
  PDE, can also be thought of as the steady state solutions of a much
  simpler (Rubinstein-Sternberg type \cite{RS}) second-order
  semi-linear non-local equation,
\begin{equation}\label{rs}
  h_t = \gamma h_{xx} +\Pi(h,x)-\frac{1}{L} \int_0^L \Pi(h,x)\, \de{x},
  \quad 0<x<L.
\end{equation}
It would be interesting to compare the dynamics of (\ref{qua}) and
(\ref{rs}), for example using the spectral comparison principles of
Bates and Fife \cite{BF}. For other work on non-local
reaction-diffusion equations such as (\ref{rs}), please see Budd \etal
\cite{Budd93} and the review of Freitas \cite{Freitas99}.

\begin{acknowledgement}
  We are grateful to Prof. U. Thiele (University of M\"unster) for
  clarifications concerning the work of Honisch \etal
  \cite{Honisch2015} and for sharing with us the AUTO codes used in
  that work which formed the basis of our continuation analysis. We
  are also grateful to the two anonymous referees whose remarks helped
  us to improve significantly the readability of the present work.
\end{acknowledgement}

\appendix

\section{$O(2)$ Symmetry Breaking by Spatial
  Non-homogeneity} \label{sb}

In this Appendix, we present an argument that shows that when the
wettability contrast is present, {\it i.e.} when $\rho \ne 0$, the
breaking of the $O(2)$ symmetry which equation (\ref{nln}) with the
periodic boundary conditions (\ref{nl2n}) has for $\rho=0$ , leaves
only two steady state solutions.

This is, in principle, a known result (see, for example, Chillingworth
\cite{Chillingworth1985}), but, since we are not aware of an easily
accessible reference, we give the details here. As before, we set
$G(x)=\sin (2\pi x)$. We provide the proof for $\Pisr$ given by
(\ref{DPsr}), the proof for $\Pilr$ given by (\ref{DPlr}) is
similar.

For the case of $\Pisr$, let us rewrite the boundary value problem
(\ref{nln}) in the form
\begin{equation}\label{app1}
\epsilon^2 h_{xx} + f_1(h) + \rho f_2(h)G(x) - \int_0^1 [f_1(h) +
\rho f_2(h)G(x)] \, \de{x} =0, \;\; 0< x < 1,
\end{equation}
where
\begin{equation}
  f_1(h)=\frac{1}{h^6} -\frac{1}{h^3},
\end{equation}
and
\begin{equation}
f_2(h) = \frac{1}{h^6},
\end{equation}
{\it i.e.} we separate the spatially homogeneous and spatially
non-homogeneous components of the disjoining pressure.  Equation
(\ref{app1}) is subject to the periodic boundary conditions
(\ref{nl2n}).

Suppose that when $\rho=0$ there is an orbit of steady state solutions,
{\it i.e.} a continuous closed curve of solutions $h_{0,s}(x)$,
parameterised by $s \in \R/[0,1]$, such that $h_{0,s}(x):=h_0(x+s)$,
for some function $h_0(x)$, i.e. all these solutions are related by
translation.  We aim to understand what remains of this orbit for
small non-zero $\rho$.

Fix $s \in \R/[0,1]$.  We write
\begin{equation}
h(x)= h_{0,s}(x)+ \rho h_1(x) + O(\rho^2).
\end{equation}
Substituting this expansion into (\ref{app1}) and collecting the
$O(\rho)$ terms, we have
\begin{eqnarray}\label{app2}
  \epsilon^2 h_{1,xx} & + & (f_1'(h_{0,s})+f_2'(h_{0,s}))h_1 - \int_0^1
  \left[f_1'(h_{0,s}) + f_2'(h_{0,s})\right] h_1 \, \de{x} \nonumber
  \\
  & = & - f_2(h_{0,s})G + \int_0^1 f_2(h_{0,s})G \, \de{x},
\end{eqnarray}
where, just like $h_{0,s}(x)$, $h_1(x)$ also satisfies
the periodic boundary conditions (\ref{nl2n}).

Now set
\begin{equation}
K u := \epsilon^2 u_{1,xx} + (f_1'(h_{0,s})+f_2'(h_{0,s}))u - \int_0^1 [f_1'(h_{0,s}) + f_2'(h_{0,s})] u \, \de{x},
\end{equation}
and let $D(K)$, the domain of $K$, be
\begin{equation}
  D(K) = \left\{f \in C^2\left(\left[0,1\right]\right) \; | \,
    f(0)=f(1), f'(0)=f'(1)\right\}.
\end{equation}
The operator $K$ is self-adjoint with respect to the $L^2([0,1])$
inner product. Invoking the Fredholm Alternative \cite[Theorem
7.26]{RY2008}, we conclude that (\ref{app2}) has $1$-periodic
solutions if and only if the right-hand side of (\ref{app2}) is
orthogonal in $L^2([0,1])$ to the solutions of $Ku=0$.

\medskip 

Next, we show that $u:= h_{0,s}'$ solves $Ku=0$. Indeed, by
differentiating (\ref{app1}) with $\rho=0$ with respect to $x$, we
find that $u$ solves the equation
\begin{equation}
  \epsilon^2 u_{xx}+ (f_1'(h_{0,s}) + f_2'(h_{0,s}))u=0.
\end{equation}
Integrating this equation over the interval $[0,1]$, we have that
\begin{equation}
\int_0^1 (f_1'(h_{0,s}) + f_2'(h_{0,s}))u\, \de{x} =0.
\end{equation}
Hence
\begin{equation}
\begin{split}
0 &~= \epsilon^2 u_{xx}+ (f_1'(h_{0,s}) + f_2'(h_{0,s}))u\\
  &~= \epsilon^2 u_{xx}+ (f_1'(h_{0,s}) + f_2'(h_{0,s}))u + \int_0^1
  (f_1'(h_{0,s}) + f_2'(h_{0,s}))u\, \de{x} \\
  &~= Ku.
\end{split}
\end{equation}

Also note that as $h_{0,s}(x)$ satisfies periodic boundary conditions, 
\begin{equation}\label{app3}
\int_0^1 h_{0,s}'(x) \de{x} = 0.
\end{equation}

Hence the solvability condition for (\ref{app2}) is
\begin{equation} \label{newapp} \int_0^1 h_{0,s}'(r)\left[
    -f_2(h_{0,s})G + \int_0^1 f_2(h_{0,s})G \, \de{x} \right] \de{r}
  =0.
\end{equation}
By (\ref{app3}), this condition reduces to   
\begin{equation}\label{app4}
\int_0^1 f_2(h_{0,s})h_{0,s}' G \, \de{x} =0.
\end{equation}

Now recall that $h_{0,s}(x) = h_0(x+s)$, so if we write
$F(x+s) = f_2(h_0(x+s))h_0'(x+s)$, the function $F(\cdot)$ is
$1$-periodic in $x$ with zero mean.  Hence
\begin{equation}
F(z)= \sum_{k=1}^\infty \alpha_k \sin (2k\pi z) +\beta_k \cos (2k\pi z).
\end{equation}
Therefore for  $G(x)=\sin(2 \pi x)$, the solvability condition
for (\ref{app2}) becomes
\begin{equation}
\alpha_1 \sin (2k\pi s)-\beta_1 \cos(2\pi s)=0,
\end{equation}
which has two solutions $ s \in \R/[0,1]$, from which we conclude there
is a solution $h_1(x)$ only for two choices of $s \in \R/[0,1]$, that
is, that only two solutions to (\ref{app1}) remain from the entire
$O(2)$ orbit that exists for $\rho=0$.

\end{document}